\newtheorem{theorem}{Theorem}
\title{Towards a Constant-Gap Sum-Capacity Result for the Gaussian Wiretap Channel with a Helper }
\author{
\IEEEauthorblockN{Rick Fritschek}
\IEEEauthorblockA{Heisenberg Communications and Information Theory Group\\
    Freie Universit\"at Berlin, \\
    Takustr. 9,
    D--14195 Berlin, Germany\\
    Email: rick.fritschek@fu-berlin.de
 \thanks{This work was carried out as part of DFG grant WU 598/8-1 within the DFG priority program SPP 1798 (CoSIP)}.
}%
\and
\IEEEauthorblockN{Gerhard Wunder}
\IEEEauthorblockA{Heisenberg Communications and Information Theory Group\\
	Freie Universit\"at Berlin, \\
    Takustr. 9,
    D--14195 Berlin, Germany\\
Email: wunder@zedat.fu-berlin.de}

}
\begin{document}

\maketitle
\begin{abstract}
Recent investigations have shown that the sum secure degrees of freedom of the Gaussian wiretap channel with a helper is $\tfrac{1}{2}$. The achievable scheme for this result is based on the real interference alignment approach. While providing a good way to show degrees of freedom results, this technique has the disadvantage of relying on the Khintchine-Groshev theorem and is therefore limited to {\it almost all channel gains}. This means that there are infinitely many channel gains, where the scheme fails. Furthermore, the real interference alignment approach cannot be used to yield stronger constant-gap results. We approach this topic from a signal-scale alignment perspective and use the linear deterministic model as a first approximation. Here we can show a constant-gap sum capacity for certain channel gain parameters. We transfer these results to the Gaussian model and discuss the results.
\end{abstract}

\section{Introduction}

The wiretap channel, introduced by Wyner \cite{Wyner75}, represents a channel model where a user wants to communicate a message to its legitimate receiver,
without leaking information to an eavesdropper. The problem was solved for the degraded wiretap channel by Wyner, and subsequently generalized to the general wiretap channel by Csiszar and K\"orner \cite{CsizarKoerner}. Moreover, the extension towards the Gaussian case was investigated by Leung-Yan-Cheon and Hellman \cite{Hellman}. This model served a long time as a standard model for physical-layer security. However, in recent years cellular communication became increasingly important. In \cite{TekinYenerGMAC-WT} the so-called Gaussian wiretap multiple-access channel (GMAC-WT) was introduced, where the general wiretap setting is investigated in a multi-user structure. Unfortunately, a general solution was out of reach and a branch of research focused on the secure degrees of freedom (s.d.o.f) instead. The s.d.o.f are a way to measure how a systems secrecy capacity scales, with power going to infinity, in comparison to the capacity of the single-link AWGN channel. A key technique for achieving degrees-of-freedom in multi-user wireless network model without secrecy constraints is interference alignment (IA), introduced in \cite{CadambeJafarIA} and \cite{Maddah-AliKhandi-IA}, among others. The main idea of IA is to design signals of multiple users such that the interference will be aligned in a small dimension at the unintended receivers. In that way, interference-free dimensions can be maximized for the intended signals. IA methods can be divided into two categories \cite{Niesen-Ali}, the vector-space alignment approach and the signal-scale alignment approach. In the vector-space alignment approach, standard signaling dimensions such as time, frequency and multiple-antennas will be used for the alignment. The drawback of these methods are, that the channel model needs to have sufficient independent dimensions for the alignment. Single-antenna, frequency-flat and time constant channel models cannot utilize these techniques. For these channel models, the signal-scale alignment approach can be used, which utilizes signal scale dimensions. An example for such methods are lattice codes, which can be used to specifically design the used signal-scale. Signal-scale alignment techniques can be further subdivided into signal-strength deterministic models and real interference alignment. The latter one uses integer lattice transmit constellations, which are scaled such that alignment can be achieved. The intended messages are then recovered by minimum-distance decoding, where the Khintchine-Groshev theorem of Diophantine approximation theory is used to bound the minimum distance and therefore the error. As a result, decoding can be shown to work for {\it for almost all} channel gains and a number of d.o.f achievability results utilized this approach. This is an unsatisfying situation, since it still leaves an infinite amount of channel gains, for which the technique does not work. Furthermore, it leaves the impression that secrecy is dependent on the precision of the channel gain measurements. Moreover, this method is limited to d.o.f. results, meaning that it cannot be used for stronger constant-gap capacity results. On the other hand, the signal-strength deterministic approach is based on the deterministic approximation of the channel models. An example for such an approximation is the so-called linear deterministic model (LDM), introduced by \cite{Avestimehr2007}. The model is based on a binary expansion of the real signals, where the additive noise is introduced as a truncation of the resulting bit-vectors. Furthermore, superposition is modeled as binary addition on each individual bit-level and channel gains are introduced as down-shifts of the bit-vectors. As a result one gets a deterministic model, which is entirely based on the channel gains. It could be shown in \cite{Suvarup2011}, \cite{Bresler2008}, \cite{Bresler2010}, \cite{FW14b}, that a layered lattice coding approach can transfer the results of the LDM to the Gaussian models, enabling constant bit-gap results. 

{\bf Previous work and Contributions:} 
Previous work on the wiretap channel in multi-user settings mainly utilized the real IA approach in addition to cooperative jamming, introduced in \cite{TekinYenerCoopJam}. The idea of using IA in a secrecy context is to cooperatively jam the eavesdropper, while aligning the jamming signal in a small subspace at the legitimate receiver. A specialized model is the wiretap channel with a helper. This model consists of the standard wiretap channel model, with a second independent user, whose only purpose is to jam the eavesdropper. In \cite{XieUlukusWiretap-Helper} and \cite{XieUlukusWiretap-Helper2}, the real IA approach was used on the wiretap channel with a helper (with and without CSIT, respectively) to investigate the s.d.o.f, therefore achieve results for the infinite SNR regime. 
In this paper, we approach the problem of the wiretap channel with a helper from a different perspective. We follow the signal-strength approach and use the LDM to approximate the channel model and derive a constant bit-gap result for it. Furthermore, we use a layered lattice coding strategy to transform the achievable scheme to the Gaussian channel model, similar to the technique used in \cite{FW14b}. We also use previous results of \cite{Bresler2008}, to relate the converse of the LDM to the Gaussian model and thereby providing a constant bit-gap results for a certain range of channel gains. Due to the approximate nature of the LDM, the achievable scheme fails to reach the upper bound at some channel gain configurations. We analyse these cases and discuss the results.

{\bf Notation: }
We denote vectors and matrices by lower case bold and upper case bold characters, respectively. For two vectors $\mathbf{a}$ and $\mathbf{b}$, we denote by $[\mathbf{a};\mathbf{b}]$ the vector that is obtained by stacking $\mathbf{a}$ over $\mathbf{b}$. 
To specify a particular range of elements in a bit level vector we use the notation $\mathbf{a}_{[i:j]}$ to indicate that $\mathbf{a}$ is restricted to the bit levels $i$ to $j$. If $i=1$, it will be omitted $\mathbf{a}_{[:j]}$, the same for $j\!=\!n$ $\mathbf{a}_{[i:]}$. Therefore $\mathbf{a}=\mathbf{a}_{[:]}$ which would correspond to no restriction at all.

\begin{figure}
\centering
\includegraphics[scale=0.65]{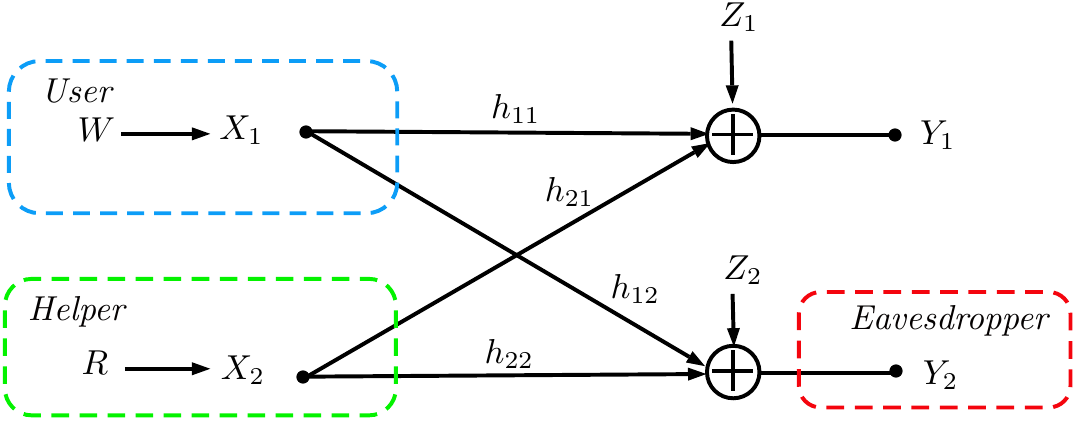}
\caption{Gaussian wiretap channel with one helper.}
\end{figure}
\label{System model}
\section{System Model} 

The Gaussian wiretap channel with a helper is defined as a system consisting of 2 transmitters and 2 receivers.
Where $X_1$ is the channel input of the user, communicating with the legitimate receiver, with channel output $Y_1$.
Whereas $X_2$ represents the channel input of an independent helper which is trying to help the user to achieve a secure transmission.
Both channel inputs are also received by an eavesdropper with channel output $Y_2$. The channel itself is modeled with additive white Gaussian noise, $Z_1,Z_2$. Therefore the system equations can be written as 
\begin{IEEEeqnarray}{rCl}\label{Gauss_Model}
Y_1&=&h_{11}X_1+h_{21}X_2+Z_1\IEEEyessubnumber\\
Y_2&=&h_{22}X_2+h_{12}X_1+Z_2\IEEEyessubnumber
\end{IEEEeqnarray},
where the channel inputs satisfy an input power constraint $E\{|X_i|^2 \} \leq P$ for each $i$. Moreover the Gaussian noise terms are assumed to be independent and zero mean with unit variance, $Z_i \sim \mathcal{CN}(0,1)$. A $(2^{nR},n)$ code will consist of an encoding and a decoding function. The encoder assigns a codeword $x_1^n(w)$ to each message $w$, where $W$ is uniformly distributed over the set $[1:2^{nR}]$, and the associated decoder assigns an estimate $\hat{w}\in [1:2^{nR}]$ to each observation of $Y_1^n$. A rate is said to be achievable if there exist a sequence of $(2^{nR},n)$ codes, for which the probability of error $P^{(n)}_e=(\hat{W}\neq W)$ goes to zero, as $n$ goes to infinity $\lim_{n\rightarrow \infty} P^{(n)}_e =0$. As opposed to the general Gaussian multiple access wiretap channel, the second channel input is used as a pure helper. This means, instead of sending codewords through $X_2$, it is used as a jamming signal. A message $W$ is said to be information-theoretically secure if the eavesdropper cannot reconstruct the message W from the channel observation $Y_2^n$. This means that the uncertainty of the message is almost equal to its entropy, given the channel observation:
\begin{equation}
\tfrac{1}{n} H(W|Y_2^n) \geq \tfrac{1}{n} H(W)-\epsilon,
\label{security_constraint}
\end{equation}
which leads to $I(W;Y_2^n)\leq \epsilon n$ for any $\epsilon>0$. A secrecy rate $r$ is said to be achievable if it is achievable while obeying the secrecy constraint \eqref{security_constraint}. 

\section{The Linear Deterministic Model System}
As simplification we will investigate the corresponding linear deterministic model (LDM) of the system model as an intermediate step. The LDM models the signals of the channel as bit-vectors $\mathbf{X}$, which is achieved by a binary expansion of the input signal $X$. The positions within the bit-vector are referred to as bit-levels. Furthermore, superposition of different signals is modeled by binary addition of the bit-levels itself. Carry over is not used to limit the superposition on the specific level where it occurs. Truncation of the bit-vector at noise level models the signal impairment of the Gaussian noise, which yields a deterministic approximation of the Gaussian model. Channel gains are included by shifting the bit-vector for an appropriate number of bit-levels. This shift is introduced by a shift-matrix $\mathbf{S}$, which is defined as
\begin{equation}
\mathbf{S}=\begin{pmatrix}
0 & 0 &  \cdots & 0 & 0\\
1 & 0 &  \cdots & 0 & 0\\
0 & 1 &  \cdots & 0 & 0\\
\vdots & \vdots & \ddots & \vdots & \vdots \\
0 & 0 &  \cdots & 1 & 0\\
\end{pmatrix}.
\end{equation}
With $\mathbf{S}$ an incoming bit vector can be shifted for $q-n$ positions with $\mathbf{Y}=\mathbf{S}^{q-n}\mathbf{X}$, where $q:=\max\{n\}$.
The channel gain is represented by $n_{ij}$-bit levels which corresponds to $\lceil\log \mbox{SNR}\rceil$ of the original channel. With this definitions, the model can be written as
\begin{IEEEeqnarray}{rCl}
\mathbf{Y}_1&=&\mathbf{S}^{q-n_{11}}\mathbf{X}'_1\oplus\mathbf{S}^{q-n_{21}}\mathbf{X}'_2\IEEEyessubnumber\\
\mathbf{Y}_2&=&\mathbf{S}^{q-n_{22}}\mathbf{X}'_2\oplus\mathbf{S}^{q-n_{12}}\mathbf{X}'_1\IEEEyessubnumber,
\label{LDM_Model}
\end{IEEEeqnarray}
where $q:=\max\{n_{11},n_{12},n_{21},n_{22}\}$. For ease of notation, we denote $\mathbf{X}_1=\mathbf{S}^{q-n_{11}}\mathbf{X}'_1$ and $\mathbf{X}_2=\mathbf{S}^{q-n_{21}}\mathbf{X}'_2$.
Furthermore, we denote $\mathbf{S}^{q-n_{22}}\mathbf{X}'_2$ and $\mathbf{S}^{q-n_{12}}\mathbf{X}'_1$ by $\bar{\mathbf{X}}_2$ and $\bar{\mathbf{X}}_1$, respectively. We assume that the helper regulates its signal such that it arrives with the same bit-level signal strength at Eve, as the signal of Alice. From a security viewpoint, this is the optimal behaviour since a weaker signal could not jam all of Alice signal and would leave bit-levels unprotected. On the other hand, if the jam signal would be stronger, it would not help to improve security either and would just increase the power load of the helper. We therefore assume that $n_{22}=n_{12}=:n_2$. 
\subsection{Achievable Scheme}
\label{ach-scheme-det-model}
\begin{figure}
\centering
\includegraphics[scale=0.9]{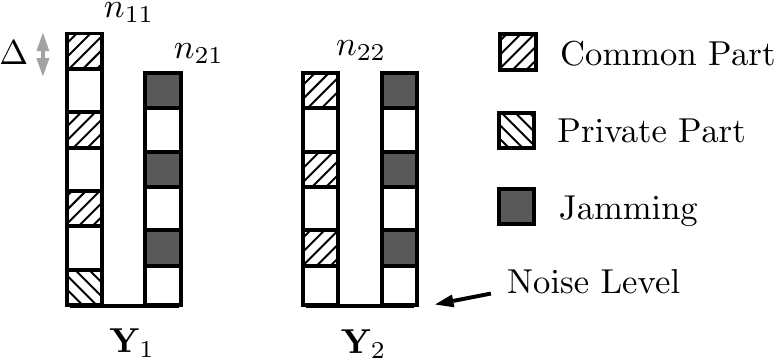}
\caption{Gaussian wiretap channel in the linear deterministic model.}
\label{wiretap-LDM}
\end{figure}

For the achievable scheme in the linear deterministic model we can make a few observations.
First of all, for $n_{11}>n_{12}$, we see that some bit-levels of the users signal fall below the noise floor at $\mathbf{Y}_2$. These bit-levels can be used for secrecy, even without a helper. We will denote this part as the private signal part. We will call the rate of the private part $r^p$. The achievability of $r^p:=(n_{11}-n_{2})^+$ is solely based on the channel properties and not on a specific scheme. With the correspondence $n \longleftrightarrow \lceil\log \mbox{SNR}  \rceil^+$, we get a rate\footnote{This rate is the maximum for cases without a helper. It is therefore an approximation of the Gaussian wiretap channel, where the capacity is known to be $C_s= \log (1+|h_{11}|^2P)- \log (1+|h_{12}|^2P)$.} of $r^p=\lceil\log |h_{11}|^2 P \rceil^+-\lceil\log |h_{12}|^2P  \rceil^+$. 
 We therefore get a partition of the secrecy rate $r:=r^p+r^c$, where $r^c$ is the achievable secrecy rate for the common part. Since the following scheme depends on the shift between the bit-vectors at Bob and at Eve, there is a singularity at $n_{11}=n_{21}$ at which the scheme fails. This is because the scheme needs a difference in channel strength in order to align the signals.

\begin{theorem}
\label{theorem_det}
The achievable secrecy rate of the linear deterministic Wiretap channel with a helper is
\begin{equation}
r_{\text{ach}}=\begin{cases}
\max\{n_{11}-n_{21},n_{21},r^p\} & \text{for } \tfrac{n_{21}}{n_{11}}<\tfrac{2}{3}
\\
r^p+r^c & \text{for } \tfrac{2}{3} \leq \tfrac{n_{21}}{n_{11}} < 2\\
n_{11} & \text{for } \tfrac{n_{21}}{n_{11}} \geq 2
\end{cases}
\end{equation}
where $r^c:=\begin{cases}
\phi_1(n^c_{Y_1},\Delta) & \text{for } n_{11}>n_{2} \text{ and } n_{11}>n_{21}
\\
\phi_2(n^{c}_{Y_1},\Delta) & \text{for everything else},
\end{cases}$
with $n^{c}_{Y_1}:=n_{11}-\,(n_{11}-n_{2})^+$ and the function $\phi_i$ for $p$,$q\in \mathbb{N}$ defined as 
\begin{equation}
\phi_1(p,q):=
\begin{cases}
\frac{(l(p,q))q}{2} & \text{if } l(p,q)\text{ is even,}
\\
p-\frac{(l(p,q)+1)q}{2} & \text{if } l(p,q)\text{ is odd},
\end{cases}
\label{phi}
\end{equation} and
\begin{equation}
\phi_2(p,q):=
\begin{cases}
\frac{(l(p,q)+1)q}{2} & \text{if } l(p,q)\text{ is odd,}
\\
p-\frac{l(p,q)q}{2} & \text{if } l(p,q)\text{ is even},
\end{cases}
\label{phi}
\end{equation}
with $\Delta:=|n_{11}-n_{21}|$, $l(p,q):=\lfloor\frac{p}{q}\rfloor\ \mbox{for}\ q>0\ \mbox{and}\ l(p,0)=0$.

\end{theorem}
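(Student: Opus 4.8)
The plan is to prove achievability constructively: build an explicit transmit strategy for Alice and the helper, then verify separately that Bob can decode and that the secrecy constraint \eqref{security_constraint} holds at Eve. The first move is to use the decomposition $r=r^p+r^c$ already introduced. The \emph{private part} needs no scheme at all: whenever $n_{11}>n_2$, the lowest $(n_{11}-n_2)^+$ bit-levels of $\mathbf{X}_1'$ are shifted below Eve's truncation level in $\mathbf{Y}_2$, so Eve observes nothing on them. Placing uncoded message bits there yields rate $r^p=(n_{11}-n_2)^+$ with $I(W^p;\mathbf{Y}_2^n)=0$ by construction. Everything nontrivial therefore concerns the \emph{common part}, which occupies the top $n^c_{Y_1}=n_{11}-(n_{11}-n_2)^+$ levels that Eve can see.

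For the common part the helper provides a signal-scale one-time pad. Because we imposed $n_{22}=n_{12}=:n_2$, I can arrange for the helper's random bits to arrive at Eve exactly on the $n^c_{Y_1}$ levels carrying Alice's common bits, so that $\mathbf{Y}_2$ shows only the bitwise XOR $\bar{\mathbf{X}}_1\oplus\bar{\mathbf{X}}_2$ on those levels; feeding the helper i.i.d.\ uniform bits then gives $I(W^c;\mathbf{Y}_2^n)=0$ by the standard pad argument. At Bob the same jamming signal is instead received at level $n_{21}$, hence offset from Alice's signal by $\Delta=|n_{11}-n_{21}|$. The design freedom is which of Alice's common levels carry message and which are deliberately left to collide with the shifted jamming at Bob so that successive cancellation can strip the interference; this is the signal-scale alignment.

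I would organise this as a periodic pattern of period $\Delta$ over the $n^c_{Y_1}$ common levels, alternating blocks that are \emph{clean} (decodable at Bob) with blocks that are \emph{jammed} (sacrificed to the helper at Bob but still covered at Eve). The number of full blocks is exactly $l(n^c_{Y_1},\Delta)=\lfloor n^c_{Y_1}/\Delta\rfloor$, and the parity of $l$ fixes whether the boundary block of size $n^c_{Y_1}-l\Delta$ ends up clean or jammed; this is the origin of the even/odd case split and the factor $\tfrac12$ in $\phi_1$ and $\phi_2$. Which of the two functions applies is dictated by the starting phase of the pattern: when $n_{11}>n_2$ and $n_{11}>n_{21}$ Alice's top levels at Bob sit above both the jammer and her own image at Eve, so the pattern starts with a clean block ($\phi_1$); otherwise the top block is already jammed ($\phi_2$). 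The three outer regimes then fall out of how well this pattern fills: for $n_{21}/n_{11}\geq 2$ the jammer is separable enough at Bob that all $n_{11}$ levels survive; for $n_{21}/n_{11}<\tfrac23$ the offset degrades the pattern so badly that the best naive option $\max\{n_{11}-n_{21},\,n_{21},\,r^p\}$ dominates; and the intermediate band is where the full count $r^p+r^c$ with the $\phi_i$ is tight.

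The step I expect to be the main obstacle is the exact layer-counting that yields $\phi_1$ and $\phi_2$. One must check, block by block, that every level declared clean is simultaneously free of residual jamming after cancellation at Bob \emph{and} still masked by jamming at Eve, and that these two requirements remain consistent across the boundary block. Handling that last partial block correctly in each sub-case --- together with the degenerate $\Delta=0$ line $n_{11}=n_{21}$ flagged before the theorem, where the pattern has no shift to exploit and the convention $l(p,0)=0$ must be reconciled with the scheme failing --- is the delicate bookkeeping that forces the parity-dependent formulas and is where the real work of the proof lies.
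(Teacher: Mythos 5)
Your construction is, at its core, the paper's own: the private/common split, the partition of the common levels into $\Delta$-sized blocks counted from the top, Alice transmitting on alternate blocks while the helper jams the same-indexed blocks so that the jamming sits exactly on the message at Eve (both arrive with strength $n_2$) and only on unused blocks at Bob, and the separate treatment of the regimes $n_{21}/n_{11}<\tfrac{2}{3}$ and $n_{21}/n_{11}\ge 2$. Your explicit one-time-pad secrecy argument for the common part is in fact cleaner than what the paper writes down.

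However, the step you defer as ``where the real work of the proof lies'' is guided by a wrong principle, and it is precisely the step that produces the theorem's $\phi_1$/$\phi_2$ dichotomy. That split is not a \emph{top} ``starting phase'' effect, and your assignment of cases is inverted. In the paper, $\phi_1$ arises from a \emph{bottom}-boundary conflict: when $n_{11}>n_2$ and $n_{11}>n_{21}$, the jamming needed to cover Alice's remainder (boundary) partition at Eve arrives at Bob shifted $\Delta$ levels \emph{down}, i.e., exactly onto the private levels carrying the private message, so the remainder partition must be dropped (``one less allocatable $\Delta$-partition''); this is what yields the smaller function $\phi_1\le\phi_2$, while the alternating pattern starts with a used block at the top in \emph{both} cases. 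Your phase criterion also mispredicts the case assignment: for $n_{11}>n_{21}$ but $n_{11}\le n_2$ Alice's top $\Delta$ levels at Bob are still helper-free --- a ``clean start'' in your terms, so your rationale points to $\phi_1$ --- yet the theorem assigns $\phi_2$, because with no private part the remainder's jamming falls harmlessly below Bob's noise floor and the remainder \emph{is} usable. If you carried out the block-by-block bookkeeping with your rule, you would either over-count in the $\phi_1$ cases (crediting the remainder block while silently corrupting private levels you have already counted in $r^p$) or fail to reproduce the stated formulas. A related slip: you say that in the $\phi_1$ case Alice's \emph{top} levels sit above her own image at Eve; it is her \emph{lowest} $(n_{11}-n_2)^+$ levels that Eve cannot see, exactly as your own (correct) private-part paragraph uses.
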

\begin{proof} In this part we will prove the achievability of the common secrecy rate $r^c$. The achievable scheme relies on a partition of the bit-levels into intervals. Every bit-vector gets partitioned into $\Delta:=|n_{11}-n_{21}|$-bit level sized partitions, starting at the top, i.e. the most significant bit. For example, the user signal at $\mathbf{Y}_1$ consists of $\lfloor\frac{n_{11}}{\Delta}\rfloor$ $\Delta$ sized partitions and a remainder partition with $n_{11}-\lfloor\frac{n_{11}}{\Delta}\rfloor$ bit-levels. The partitions of the bit-vectors will be enumerated from the top downwards. Now, all odd partitions of $\mathbf{X}_1$ will be used for encoding of the message, while all odd partitions of $\bar{\mathbf{X}}_2$ will be used for jamming. Note that we used the jamming signal at $\mathbf{Y}_2$ as basis, because the need for jamming is limited to the wiretapper. Which means that in case $n_{21}>n_{2}$, the part $\mathbf{X}_{2,{[n_{2}+1:]}}$ can be left unused without secrecy penalty. Due to the $\Delta$-partitioning, the $\Delta$-shift will result in an exact alignment of the message partitions with the jamming partitions at $\mathbf{Y}_2$. Moreover, at $\mathbf{Y}_1$, the used partitions will align with the unused partitions, enabling the secrecy rate. We can therefore sum over the bit-allocation of $\mathbf{X}_1$ to get the common part secrecy rate. For an odd number of partitions, the remainder partition is even (and therefore unused) and we just have to sum over $\lfloor\frac{n_{11}}{\Delta}\rfloor$ partitions. Since only every second partition is used, and accounting for an odd total number of partitions, the rate $r^c$ follows for the case $l(n^{c}_{Y_1} ,\Delta)$ odd. Similarly for the case of an even number of partitions. In that case one needs to account for the remainder partition, which will be odd and is therefore also used. As a special case, we need to consider the situation with a non-zero private part, and weaker helper signal at $\mathbf{Y}_1$. In this case, we cannot utilize the remainder-part, since the jamming signal of the helper would fall in the private part range. This results in one less allocatable $\Delta$-partition and yields the function $\phi_1$. For the range $ \tfrac{n_{21}}{n_{11}}<\tfrac{2}{3}$, one can allocate the whole helper-free common bit-levels at $\mathbf{Y}_1^n$. Together with the private part, the achievable rate is reached. For the range of $\tfrac{n_{21}}{n_{11}} \geq 2$, the helper signal is strong enough to jam the whole user signal at the eavesdropper, without affecting the transmission to the legitimate receiver. Therefore the top $n_{11}$ bits of $\mathbf{X}_2$ are used for jamming, while all of $\mathbf{X}_1$ is used for signal transmission.
\end{proof}

\subsection{Converse}
\label{det-converse}
\begin{theorem}
The secrecy rate $R$ of the linear deterministic wiretap channel with one helper and symmetric channel gains at the wiretapper is bounded from above by

\begin{equation*}
R \leq \min\{ r_{ub1}, r_{ub2}, r_{ub3}\}
\end{equation*}
with \begin{IEEEeqnarray*}{rCl}
r_{ub1}&=& (n_{11}-n_{2})^++\tfrac{1}{2}(\max\{n_{11},n_{21}\}-(n_{11}-n_{2})^+)\\
&&+\:\tfrac{1}{2} (n_{2}-n_{21})^+\\
r_{ub2}&=&n_{11}\\
r_{ub3}&=&n_{21}+(n_{11}-n_{21}-n_{2})^+\\
&& +\:[n_2-n_{21}-(n_2-n_{11}+n_{21})^+]^+
\end{IEEEeqnarray*}
\end{theorem}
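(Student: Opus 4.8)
The plan is to run a standard secrecy converse and then specialise every entropy term to the bit-level combinatorics of the LDM. First I would fold reliability and secrecy into one inequality: Fano on the legitimate link gives $nR \le I(W;Y_1^n) + n\epsilon_n$, the secrecy constraint \eqref{security_constraint} gives $I(W;Y_2^n)\le n\epsilon$, and adding and subtracting $I(W;Y_2^n)$ yields
\begin{equation}
nR \le I(W;Y_1^n) - I(W;Y_2^n) + n\epsilon'_n ,
\end{equation}
with $\epsilon'_n\to 0$. This is the master inequality for the two secrecy bounds $r_{ub1}$ and $r_{ub3}$. The reliability-only bound $r_{ub2}=n_{11}$ I would obtain separately and quickly: handing the helper signal $\mathbf{X}_2^n$ to Bob as genie side information and using $\mathbf{X}_2^n\perp W$ gives $nR\le H(Y_1^n\mid \mathbf{X}_2^n)+n\epsilon_n$, and once the interference is stripped off, $Y_1^n$ retains only the top $n_{11}$ bit-levels of the user signal, so $H(Y_1^n\mid \mathbf{X}_2^n)\le n\,n_{11}$.

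For $r_{ub1}$ and $r_{ub3}$ I would expand the master inequality into differences of entropies and use two structural facts of the LDM. Conditioning on $W$ makes $\mathbf{X}_1^n$ and $\bar{\mathbf{X}}_1^n$ deterministic, so $H(Y_1^n\mid W)=H(\mathbf{X}_2^n)$ and $H(Y_2^n\mid W)=H(\bar{\mathbf{X}}_2^n)$; and the secrecy constraint, read as $H(Y_2^n)\le H(\bar{\mathbf{X}}_2^n)+n\epsilon$, pins Eve's total received entropy to the helper's entropy up to a vanishing term. Substituting these collapses the bound to a sum of entropies of disjoint bit-level intervals of the two transmit signals, each at most its dimension; the private levels contribute the $(n_{11}-n_2)^+=r^p$ term. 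The factor $\tfrac12$ in $r_{ub1}$ I would produce by writing two copies of the master inequality, bounding them with complementary genie choices (top versus bottom portions of the helper's bit-vector) and adding them so that the common helper-entropy term $H(\bar{\mathbf{X}}_2^n)$ cancels; dividing by two leaves half of the residual received span $\max\{n_{11},n_{21}\}-r^p$ together with the half-correction $\tfrac12(n_2-n_{21})^+$ accounting for levels Eve sees but the helper does not reach at Bob. The bound $r_{ub3}$ is instead a pure dimension count: after removing by genie the levels of the user signal lying below the helper's reach at Bob, the remaining confusion is limited to the $n_{21}$ helper levels, the uncovered high levels $(n_{11}-n_{21}-n_2)^+$, and the bracketed correction for Eve-visible, helper-uncovered levels.

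The main obstacle is the superposition model: at each receiver the two transmit vectors are added bit-wise without carry after different shifts, so conditioning on a received signal such as $Y_2^n$ does not cleanly isolate either $\mathbf{X}_1^n$ or $\mathbf{X}_2^n$. The difficulty is therefore not a single hard inequality but the careful choice of genie side information --- which bit-level intervals of the helper to reveal --- so that every residual entropy term degenerates to an interval of known length and the secrecy constraint can be invoked to cancel Eve's entropy exactly. Getting the nested $(\cdot)^+$ corrections and the $\tfrac12$ factors right then reduces to a case-by-case bookkeeping of how the intervals $[1\!:\!n_{11}]$, $[1\!:\!n_{21}]$ and $[1\!:\!n_2]$ overlap at Bob versus at Eve; this alignment-counting, rather than any deep information-theoretic step, is where the real work lies.
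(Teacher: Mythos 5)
Your setup coincides with the paper's: the master inequality $nR \le I(W;\mathbf{Y}_1^n)-I(W;\mathbf{Y}_2^n)+n\epsilon'_n$, and the collapse $H(\mathbf{Y}_1^n|W)=H(\mathbf{X}_2^n)$, $H(\mathbf{Y}_2^n|W)=H(\bar{\mathbf{X}}_2^n)$ (valid because $X_1$ is a deterministic function of $W$ and $X_2$ is independent of $W$) are exactly equation (5) of the paper; your genie route to $r_{ub2}$ (reveal $\mathbf{X}_2^n$ to Bob, use Fano only) is a fine and even simpler alternative to the paper's derivation of $n_{11}$. The gap is in the heart of the theorem, the factor $\tfrac12$ in $r_{ub1}$. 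You propose two copies of the master inequality with ``complementary genie choices (top versus bottom portions of the helper's bit-vector)'' so that $H(\bar{\mathbf{X}}_2^n)$ cancels. A genie built only from the helper's signal can never generate the term that actually produces the halving: a \emph{negative} copy of $H(\bar{\mathbf{X}}_1^n)$, the entropy of the \emph{user's} signal as seen by Eve. The paper obtains it by doubling Eve's output entropy and conditioning once on each transmit signal, $2H(\mathbf{Y}_2^n)\ge H(\mathbf{Y}_2^n|\bar{\mathbf{X}}_1^n)+H(\mathbf{Y}_2^n|\bar{\mathbf{X}}_2^n)=H(\bar{\mathbf{X}}_2^n)+H(\bar{\mathbf{X}}_1^n)$, and then absorbing one of the two copies of the common part of $H(\mathbf{Y}_1^n)$ into $-H(\bar{\mathbf{X}}_1^n)$ via the XOR subadditivity $H((\mathbf{X}_1^n\oplus\mathbf{X}_2^n)_{[:\eta]})\le H(\mathbf{X}_{1,[:\eta]}^n)+H(\mathbf{X}_{2,[:\eta']}^n)$ together with $H(\mathbf{X}_{1,[:\eta]}^n)\le H(\bar{\mathbf{X}}_1^n)$. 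Your scheme, in which both genies are pieces of $\mathbf{X}_2$, has no source for $-H(\bar{\mathbf{X}}_1^n)$ and therefore cannot halve the user's common-part contribution.

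The deeper reason your route closes off is your preliminary move of ``pinning'' $H(\mathbf{Y}_2^n)\le H(\bar{\mathbf{X}}_2^n)+n\epsilon$ by the secrecy constraint: that constraint is already spent in forming the master inequality, and the direction needed to bound the term $-H(\mathbf{Y}_2^n)+H(\bar{\mathbf{X}}_2^n)$ is the trivial one, $H(\mathbf{Y}_2^n)\ge H(\mathbf{Y}_2^n|W)=H(\bar{\mathbf{X}}_2^n)$. Once you make that (lossy) replacement you are left with $nR\lesssim H(\mathbf{Y}_1^n)-H(\mathbf{X}_2^n)$, and every further step you describe (genies, interval bookkeeping) is an inequality valid for \emph{every} input distribution, not only for secrecy-achieving codes. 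But as a functional of arbitrary input distributions, $H(\mathbf{Y}_1^n)-H(\mathbf{X}_2^n)$ can reach $n\max\{n_{11},n_{21}\}$: take $\mathbf{X}_2^n$ deterministic and $\mathbf{X}_1^n$ uniform, so that for $n_{11}=n_{21}=n_2=m$ the difference equals $nm$ while $r_{ub1}=m/2$. Hence no distribution-independent bookkeeping after that substitution can recover the $\tfrac12$; one must keep $-H(\mathbf{Y}_2^n)$ alive, since it is precisely the term that punishes a low-entropy helper, and exploit it by the conditioning trick above. The same issue touches your $r_{ub3}$ sketch: the bracket $[n_2-n_{21}-(n_2-n_{11}+n_{21})^+]^+$ is not a pure dimension count but comes, in the paper, from bounding the signed combination $H(\bar{\mathbf{X}}_2^n)-H(\mathbf{Y}_{2,[(n_{11}-n_{21})+1:]}^n|\mathbf{Y}_{2,[:(n_{11}-n_{21})]}^n,\mathbf{X}_1^n)-H(\mathbf{X}_2^n)$, in which the negative entropy terms are essential.
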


\begin{proof}

We start as in \cite{XieUlukusWiretap-Helper} with the following procedure
\begin{IEEEeqnarray*}{rCl}
nR&=& H(W|\mathbf{Y}_1^n)+I(W;\mathbf{Y}_1^n)\\
&\leq & I(W;\mathbf{Y}_1^n)+ n\epsilon\\
&\leq & I(W;\mathbf{Y}_1^n)-I(W;\mathbf{Y}_2^n)+n\epsilon_2\\
&= & H(\mathbf{Y}_1^n)-H(\mathbf{Y}_1^n|W)-H(\mathbf{Y}_2^n)+H(\mathbf{Y}_2^n|W)+n\epsilon_2\\
&=& H(\mathbf{Y}_1^n)-H(\mathbf{Y}_2^n)+H(\bar{\mathbf{X}}_2^n)-H(\mathbf{X}_2^n)+n\epsilon_2\IEEEyesnumber
\label{startEQN}
\end{IEEEeqnarray*} where Fanos inequality and the secrecy constraint was used. In the last line we used that $X_1$ is a function of $W$, and $X_2$ is independent of $W$. We remark that the first property does not hold in general, since jamming through the first user would result in a stochastic function. For the following analyses we have to distinguish between the three cases $n_{21}>n_{2}$, $n_{21}< n_{2}$ and $n_{21}= n_{2}$.
We begin with the entropy difference $H(\mathbf{Y}_1^n)-H(\mathbf{Y}_2^n)$. First of all, in case that $n_{21}< n_{2}$, we have to split up the negative entropy part in
$H(\mathbf{Y}_2^n)=H(\mathbf{Y}_{2,[:n_{21}]}^n)+H(\mathbf{Y}_{2,[n_{21}+1:]}^n|\mathbf{Y}_{2,[:n_{21}]}^n)$. The latter term will be used with the terms $H(\bar{\mathbf{X}}_2^n)-H(\mathbf{X}_2^n)$ which yields 
\begin{IEEEeqnarray*}{rCl}
&&H(\bar{\mathbf{X}}_2^n)-H(\mathbf{X}_2^n)-H(\mathbf{Y}_{2,[n_{21}+1:]}^n|\mathbf{Y}_{2,[:n_{21}]}^n)\\
&\leq & H(\bar{\mathbf{X}}_2^n)-H(\mathbf{X}_2^n)-H(\mathbf{Y}_{2,[n_{21}+1:]}^n|\mathbf{Y}_{2,[:n_{21}]}^n,\mathbf{X}_1^n)\\
&=&0
\end{IEEEeqnarray*}

In the following proof section we assume that $n_{21}>n_{2}$. The case $n_{21}< n_{2}$ follows in a similar matter.
We define $\eta:=\max\{n_{11},n_{21}\}-(n_{11}-n_{2})^+$ and $\eta':=n_{21}-(n_{11}-n_{2})^+$ to split
the received signals in common and private parts. We start by adding two of the terms and show
\begin{IEEEeqnarray*}{rCl}
&&2(H(\mathbf{X}_1^n\oplus \mathbf{X}_2^n)-H(\bar{\mathbf{X}}_1^n\oplus \bar{\mathbf{X}}_2^n)) \\
&\leq & 2H((\mathbf{X}_1^n\oplus \mathbf{X}_2^n)_{[\eta+1:]})+ 2H((\mathbf{X}_1^n\oplus \mathbf{X}_2^n)_{[:\eta]})\\
&&\:-2H(\bar{\mathbf{X}}_1^n\oplus \bar{\mathbf{X}}_2^n).
\end{IEEEeqnarray*}
Note that the private part $H((\mathbf{X}_1^n\oplus \mathbf{X}_2^n)_{[\eta+1:]})$ is zero for $n_{11}\leq n_{2}$.
In the next step, we will analyse the entropy difference. We will use a method inspired by \cite{Fritschek2014} to show the following
\begin{IEEEeqnarray*}{rCl}
&&2(H(\mathbf{X}_1^n\oplus \mathbf{X}_2^n)-H(\bar{\mathbf{X}}_1^n\oplus \bar{\mathbf{X}}_2^n)) \\
&\leq & 2H((\mathbf{X}_1^n\oplus \mathbf{X}_2^n)_{[\eta+1:]})+ 2H((\mathbf{X}_1^n\oplus \mathbf{X}_2^n)_{[1:\eta]})\\
&&-\:H(\bar{\mathbf{X}}_1^n\oplus \bar{\mathbf{X}}_2^n|\bar{\mathbf{X}}_1^n)-H(\bar{\mathbf{X}}_1^n\oplus \bar{\mathbf{X}}_2^n|\bar{\mathbf{X}}_2^n)\\
&= & 2H((\mathbf{X}_1^n\oplus \mathbf{X}_2^n)_{[\eta+1:]})+ 2H((\mathbf{X}_1^n\oplus \mathbf{X}_2^n)_{[1:\eta]}|)\\
&&\:-H(\bar{\mathbf{X}}_{2}^n)-H(\bar{\mathbf{X}}_{1}^n)\\
&\leq & 2H((\mathbf{X}_1^n\oplus \mathbf{X}_2^n)_{[\eta+1:]})+ H((\mathbf{X}_1^n\oplus \mathbf{X}_2^n)_{[1:\eta]})\\
&&\:+H( \mathbf{X}_{2,[1:\eta']}^n)+ H(\mathbf{X}_{1,[1:\eta]}^n)-H(\bar{\mathbf{X}}_{2}^n)-H(\bar{\mathbf{X}}_{1}^n).\\
\end{IEEEeqnarray*}

We now see that $H(\mathbf{X}_{1,[1:\eta'']}^n)-H(\bar{\mathbf{X}}_{1}^n)\leq 0$, because the first term is just the common part and we have that
$H( \mathbf{X}_{2,[1:\eta']}^n)-H(\bar{\mathbf{X}}_{2}^n)$ depends on the actual channel gains. For $n_{21}< n_{2}$ these terms have the same strength due to the split of $H(\mathbf{Y}_2^n)$ resulting in zero. For $n_{21}>n_{2}$ we get 
$H(\mathbf{X}_{2,[n_{2}+1:\eta']}^{n}|\mathbf{X}_2^{n,c})$.
Now one can divide all terms by two, resulting in
\begin{IEEEeqnarray*}{rCl}
&&H(\mathbf{Y}_1^n)-H(\mathbf{Y}_2^n) \\
&\leq & H((\mathbf{X}_1^n\oplus \mathbf{X}_2^n)_{[\eta+1:]})+ \frac{1}{2}H((\mathbf{X}_1^n\oplus \mathbf{X}_2^n)_{[:\eta]})\\
&&\:+\frac{1}{2} (H( \mathbf{X}_{2,[1:\eta']}^n)-H(\bar{\mathbf{X}}_{2}^n)).
\end{IEEEeqnarray*}
We finalize the proof by plugging this result into \eqref{startEQN}, which results in
\begin{IEEEeqnarray*}{rCl}
nR&\leq& H(\mathbf{Y}_1^n)-H(\mathbf{Y}_2^n)+H(\bar{\mathbf{X}}_2^n)-H(\mathbf{X}_2^n)+n\epsilon_2\\
&\leq & n(n_{11}-n_{2})^++n\tfrac{1}{2}(\max\{n_{11},n_{21}\}-(n_{11}-n_{2})^+)+\\
&&+\:n\tfrac{1}{2} (n_{2}-n_{21})^++n\epsilon_2
\end{IEEEeqnarray*}
dividing by n and letting $n\rightarrow \infty$ shows the result.

For the case that $\beta_1>2$ we have that
\begin{IEEEeqnarray*}{rCl}
n(R-\epsilon_2) &\leq &  H(\mathbf{Y}_1^n)-H(\mathbf{Y}_2^n)+H(\bar{\mathbf{X}}_2^n)-H(\mathbf{X}_2^n)\\
& \leq & H(\mathbf{X}_1^n)+H(\mathbf{X}_2^n)-H(\mathbf{Y}_2^n|\bar{\mathbf{X}}_1^n)+H(\bar{\mathbf{X}}_2^n)-H(\mathbf{X}_2^n)\\
&=& H(\mathbf{X}_1^n)\\
& \leq & n n_{11}
\end{IEEEeqnarray*}
and for the case that $\beta_1<\tfrac{2}{3}$ we have that
\begin{IEEEeqnarray*}{rCl}
nR &\leq &  H(\mathbf{Y}_1^n)-H(\mathbf{Y}_2^n)+H(\bar{\mathbf{X}}_2^n)-H(\mathbf{X}_2^n)+n\epsilon_2\\
& \leq & H(\mathbf{Y}_{1,[:(n_{11}-n_{21})]}^n)+ H(\mathbf{Y}_{1,[(n_{11}-n_{21})+1:]}^n|\mathbf{Y}_{1,[:(n_{11}-n_{21})]}^n)\\
&&-\: H(\mathbf{Y}_2^n)+H(\bar{\mathbf{X}}_2^n)-H(\mathbf{X}_2^n)+n\epsilon_2\\
& \leq & H(\mathbf{Y}_{1,[:(n_{11}-n_{21})]}^n)+ H(\mathbf{Y}_{1,[(n_{11}-n_{21})+1:]}^n|\mathbf{Y}_{1,[:(n_{11}-n_{21})]}^n)\\
&&-\: H(\mathbf{Y}_{2,[:(n_{11}-n_{21})]}^n)-H(\mathbf{Y}_{2,[(n_{11}-n_{21})+1:]}^n|\mathbf{Y}_{2,[:(n_{11}-n_{21})]}^n,\mathbf{X}_1^n)\\
&&+\: H(\bar{\mathbf{X}}_2^n)-H(\mathbf{X}_2^n)+n\epsilon_2.
\end{IEEEeqnarray*}

One can show that \begin{IEEEeqnarray*}{rCl}
&&H(\mathbf{Y}_{1,[:(n_{11}-n_{21})]}^n)-H(\mathbf{Y}_{2,[:(n_{11}-n_{21})]}^n) \\
&\leq & n(n_{11}-n_{21}-n_{2})^+
\end{IEEEeqnarray*}
and \begin{IEEEeqnarray*}{rCl}
&&H(\bar{\mathbf{X}}_2^n)-H(\mathbf{Y}_{2,[(n_{11}-n_{21})+1:]}^n|\mathbf{Y}_{2,[:(n_{11}-n_{21})]}^n,\mathbf{X}_1^n)-H(\mathbf{X}_2^n)\\
&\leq & n[n_2-n_{21}-(n_2-n_{11}+n_{21})^+]^+
\end{IEEEeqnarray*}
and $H(\mathbf{Y}_{1,[(n_{11}-n_{21})+1:]}^n|\mathbf{Y}_{1,[:(n_{11}-n_{21})]}^n) \leq nn_{21}$ which yields
\begin{IEEEeqnarray*}{rCl}
nR & \leq & H(\mathbf{Y}_{1,[:(n_{11}-n_{21})]}^n)+ H(\mathbf{Y}_{1,[(n_{11}-n_{21})+1:]}^n|\mathbf{Y}_{1,[:(n_{11}-n_{21})]}^n)\\
&&-\: H(\mathbf{Y}_{2,[:(n_{11}-n_{21})]}^n)\\
&&-\:H(\mathbf{Y}_{2,[(n_{11}-n_{21})+1:]}^n|\mathbf{Y}_{2,[:(n_{11}-n_{21})]}^n,\mathbf{X}_1^n)\\
&&+\: H(\bar{\mathbf{X}}_2^n)-H(\mathbf{X}_2^n)+n\epsilon_2\\
&\leq & nn_{21}+n(n_{11}-n_{21}-n_{2})^+\\
&& +\:n[n_2-n_{21}-(n_2-n_{11}+n_{21})^+]^++n\epsilon_2
\end{IEEEeqnarray*}

dividing by n and letting $n\rightarrow \infty$ shows the result.

\end{proof}

\section{The Gaussian wiretap channel with a helper}

In this section we analyse the Gaussian wiretap channel with a helper. To get results we stick to the previously developed scheme in section \ref{ach-scheme-det-model}, and we will transfer the alignment and jamming structure to its Gaussian equivalent with layered lattice codes. This will lead to an achievable rate which is directly based on the deterministic rate. Moreover, we will make use of results in \cite{Bresler2008} to show that the mutual information of the Gaussian case can be upper bounded by the deterministic terms. As a result, the deterministic bound in section \ref{det-converse} is a bound for the Gaussian model as well, with a constant bit-gap attached. 

\subsection{Achievable Scheme}

\begin{theorem}
The achievable secrecy rate of the Gaussian Wiretap channel with a helper is
\begin{equation}
r_{\text{ach}}=\begin{cases}
\max\{\log(\text{SNR}_1^{(1-\beta_1)}),\log(\text{SNR}_2),r^p\}  & \text{for } \beta_1<\tfrac{2}{3}
\\
r^p+r^c-d & \text{for } \tfrac{2}{3} \leq \beta_1 < 2\\
\log(\text{SNR}_1) & \text{for } \beta_1\geq 2
\end{cases}
\end{equation}
where $r^p:=\log(\text{SNR}_1^{1-\beta_2})$
and
$r^c:=\\\begin{cases}
\phi_1(\log \text{SNR}_1^{(1-(1-\beta_2)^+)},\log \text{SNR}_1^{(1-\beta_1)}) & \text{for }\beta_1, \beta_2 \leq 1,
\\
\phi_2(\log \text{SNR}_1^{(1-(1-\beta_2)^+)},\log \text{SNR}_1^{(1-\beta_1)}) & \text{for everything else},
\end{cases}$
with the function $\phi_i$ as defined in \eqref{theorem_det} and $d=\left\lfloor l_{max} \right\rfloor$.
\end{theorem}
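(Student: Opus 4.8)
The plan is to transport the bit-level partitioning scheme of Theorem~\ref{theorem_det} to the Gaussian model by means of layered lattice codes, following the transfer technique of \cite{FW14b}. First I would fix the scale correspondence $n_{ij}\leftrightarrow\lceil\log\text{SNR}_{ij}\rceil^+$ used throughout Section~\ref{ach-scheme-det-model}, so that the $\Delta$-sized bit-level partitions become power layers of width $\Delta\leftrightarrow\log\text{SNR}_1^{(1-\beta_1)}$ on the real line, with $\beta_1$ and $\beta_2$ playing the roles of $n_{21}/n_{11}$ and $n_{2}/n_{11}$. Alice's transmit signal $X_1$ is then constructed as a superposition of independently coded nested-lattice layers placed on the odd partitions, while the helper's signal $X_2$ is a superposition of lattice jamming layers positioned so as to reproduce the deterministic alignment: after scaling by the respective channel gains, the jamming layers coincide with the message layers at Eve, whereas at Bob they fall onto the complementary, message-free partitions.

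For the legitimate receiver I would argue achievability by successive lattice decoding. Because the helper's layers align with the even (unused) partitions at Bob, each message layer is corrupted by only a bounded number of interfering levels, and standard nested-lattice coding arguments as employed in \cite{FW14b} guarantee that every active layer carries its nominal $\log$-rate up to an additive constant. Summing the per-layer contributions reproduces the deterministic total $r^p+r^c$, and the layer-by-layer constant losses accumulate into the overall gap $d=\lfloor l_{max}\rfloor$, where $l_{max}$ is the number of active layers; this quantity is governed by $l(n^c_{Y_1},\Delta)$ and remains bounded precisely as long as $\beta_1$ is kept away from the singularity at $\beta_1=1$. The two boundary regimes are handled exactly as in the deterministic proof: for $\beta_1<\tfrac{2}{3}$ the whole helper-free common range is loaded together with the private part $r^p$, and for $\beta_1\ge 2$ the helper is strong enough to cover all of $X_1$ at Eve while leaving Bob's reception unimpaired, giving $\log(\text{SNR}_1)$.

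The security constraint is what I expect to be the main obstacle. In the deterministic model the alignment of message and jamming at Eve is exact (a bit-wise $\oplus$), so $I(W;\mathbf{Y}_2^n)=0$ is immediate; in the Gaussian model the lattice alignment is only approximate, and one must show that the leakage $I(W;Y_2^n)$ can still be forced below $\epsilon n$. I would establish this by choosing the jamming lattices so that, on each secured scale, the helper codeword is asymptotically uniform over the same coset structure as the message codeword, so that Eve's observation on the protected layers becomes statistically near-independent of $W$; the residual leakage stemming from the imperfect overlap and from the private layers is then bounded by a constant and absorbed into $\epsilon$. The delicate point is that the \emph{same} lattice scaling must simultaneously deliver the decodability at Bob and the equivocation at Eve, so the achievable rate and the security constraint are coupled through a single choice of layer shaping; controlling this coupling, rather than either bound taken in isolation, is the crux of the argument.
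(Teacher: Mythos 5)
Your construction coincides with the paper's own proof in every step the paper actually carries out: the same power-layer correspondence $\Delta \leftrightarrow \log \text{SNR}_1^{(1-\beta_1)}$ with layer powers $\theta_l = \text{SNR}_1^{1-(l-1)(1-\beta_1)}-\text{SNR}_1^{1-l(1-\beta_1)}$, Loeliger-type lattice codes on the odd layers of $X_1$ with the helper's jamming layers aligned on the same scales at Eve and on the complementary layers at Bob, level-wise decoding at Bob treating all lower layers plus noise as an effective Gaussian point-to-point channel, accumulation of the per-layer constant losses into the gap $d=\lfloor l_{max}\rfloor$ with $l_{max}=\tfrac{1}{1-\beta_1}$, and the two boundary regimes $\beta_1<\tfrac{2}{3}$, $\beta_1\geq 2$ inherited from the deterministic argument. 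So as far as decodability and the rate computation go, you have reproduced the paper's route.

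Where you genuinely differ is your third paragraph. You treat the equivocation bound $I(W;Y_2^n)\leq \epsilon n$ as the crux requiring its own argument (coset-uniformity of the jamming lattice on each protected scale, plus control of the residual leakage from imperfect overlap). The paper does \emph{not} do this: its proof simply states that ``due to the signal scale based coding strategy and the equal receive power at $Y_2$, an alignment is achieved'' and then proves only decodability at $Y_1$ and the rate accounting. In other words, your instinct has located a gap in the paper's own proof rather than an alternative path around it --- in the deterministic model exact bit-wise alignment makes the leakage argument immediate, but in the Gaussian model the step from ``aligned received powers'' to ``vanishing leakage per channel use'' is nontrivial and is precisely the coupled shaping/equivocation issue you describe. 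A referee should regard your sketch (if completed, e.g.\ along the lines of known nested-lattice secrecy arguments) as strengthening the theorem, not as deviating from it; conversely, be aware that your own sketch is still a plan, and carrying out the uniformity argument simultaneously with the decodability constraint is the part neither you nor the paper has actually written down.
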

\begin{proof}
For the achievable scheme, we need to partition the available power into intervals. Each of these intervals plays the role of an $\Delta-$Interval of bit-levels in the linear deterministic scheme. Remember that we have $E\{|X_i|^2\}\leq P$ and $Z_1,Z_2 \sim \mathcal{CN}(0,1)$, which means that $|h_{11}|^2P=\text{SNR}_1$ and $|h_{21}|^2P=\text{SNR}_2$ represent the power of both direct signals. As in the deterministic model, we assume that both signals at $Y_2$ are received with the same power and therefore $h_{12}=h_{22}=h_2$ which gives $|h_2|^2P=\text{SNR}_3$. We introduce the two parameters $\beta_1$ and $\beta_2$, which connects the $\text{SNR}$ ratios with $\text{SNR}_2=\text{SNR}_1^{\beta_1}$ and $\text{SNR}_3=\text{SNR}_1^{\beta_2}$. Now we can partition the received power at $Y_1$ into intervals of $\text{SNR}_1^{(1-\beta_1)}$. Each of the intervals has therefore signal power $\theta_l$ which is defined as 
\begin{IEEEeqnarray}{rCl}
\theta_l & = & q_{l-1}-q_l\IEEEnonumber\\
&= & \text{SNR}_1^{1-(l-1)(1-\beta_1)}-\text{SNR}_1^{1-l(1-\beta_1)}
\label{theta_k_fractions}
\end{IEEEeqnarray}
with $l$ indicating the specific level. The users decompose the signals $X_i$ into a sum of independent sub-signals $X_i=\sum_{l=1}^{l_{max}}X_{il}$. For each of these sub-signals, a lattice code is chosen as described in \cite{Loeliger}, such that it is good for channel coding with an average power per dimension of $\theta_l$. $X_1$ is used for signal transmission, while $X_2$ is solely used for jamming. As in the deterministic case, the objective is to align the signal parts of $X_1$ with the jamming of $X_2$ at $Y_2$, while allowing decoding of the signal parts at $Y_1$. Due to the signal scale based coding strategy and the equal receive power at $Y_2$, an alignment is achieved with the proposed scheme. We therefore need to prove, that the signal can be decoded at $Y_1$.
\subsubsection{Decoding procedure}
The decoding is done level-wise, treating subsequent levels as noise. Every level is treated as a Gaussian point-to-point channel with power $\theta_l$ and noise $1+2\text{SNR}_1^{1-l(1-\beta_1)}$, which consists of the base noise $N_1$ at $Y_1$ and the power of all subsequent levels of both signals. Successful decoding can be assured with a rate limitation of 
\begin{equation}
r_l\leq \log \left(\frac{\theta_l}{1+2\text{SNR}_1^{1-l(1-\beta_1)}}\right),
\label{Decodingbound}
\end{equation} which ensures that a lattice code $(\gamma\Lambda_C+v) \cap S$ exists with arbitrary small error probability \cite{Loeliger}. This code consists of a lattice $\Lambda_C \in \mathbb{R}^n$, a scaling factor $\gamma \in \mathbb{R}$, a translation $v\in \mathbb{R}^n$ and a spherical shaping region $S\subset\mathbb{R}^n$ with power $\theta_l$ per dimension.
\subsubsection{Achievable rate}
As in the deterministic case, we have a private part in case that $\beta_2 < 1$, which can be used completely. Since it has only the base noise, a rate of $r^p=\log(\text{SNR}_1^{(1-\beta_2)})=\log(\text{SNR}_1)-\log(\text{SNR}_3)$ can be achieved. For the common part we need to look into the scheme itself. All odd levels of $X_1$ will be used for signal transmission. Every level $l$ can handle a rate of $r_l$. We have a total of $\lceil l_{max}\rceil$ levels in $X_1$, where $l_{max}:=\frac{1}{1-\beta_1}$. This sums up to a rate of
\begin{equation}
r^c=\sum_{l=1}^{\left\lfloor l_{max}\right\rfloor} \mathbb{1}_{odd} \log \left( \frac{\theta_l}{1+2\text{SNR}_1^{1-l(1-\beta_1)}} \right),
\label{Gauss_rate}
\end{equation}
where we count only the odd levels towards the achievable rate. Moreover, we need to consider the remainder term, which is allocated between the alignment structure and the noise floor or the private part (if $\beta_3<1$). This term is zero for an odd number of full levels and it is used, if there is an even number of full levels. In that case the remainder term would yield a rate of $r^R\leq \log (\text{SNR}_1^{1-\lceil l_{max} \rceil(1-\beta_1)}-1)$. Furthermore, we can simplify the rate of \eqref{Gauss_rate} with
\begin{IEEEeqnarray*}{rCl}
\log \left(\frac{a-b}{1+cb} \right)&> & \log \left(\frac{a-b}{1+b} \right)-\log c\\
&>& \log \left(\frac{1+a}{1+b} \right)-\log (c)-1\\
&>& \log \left(\frac{a}{2b} \right) -\log (c)-1
\end{IEEEeqnarray*} where we used that $c,b>1$ and $\frac{a-b}{1+b}>1$ and get 
\begin{IEEEeqnarray*}{rCl}
r^c&>&\sum_{l=1}^{\left\lfloor l_{max}\right\rfloor} \mathbb{1}_{odd} \log \left( \frac{\text{SNR}_1^{1-(l-1)(1-\beta_1)}}{2\text{SNR}_1^{1-l(1-\beta_1)}} \right) - 2\\
&>& \tfrac{\left\lfloor l_{max}\right\rfloor}{2} \log \text{SNR}_1^{(1-\beta_1)} - \left\lfloor l_{max} \right\rfloor
\end{IEEEeqnarray*}
Combined with the remainder part, we see that we get $\phi_1(\log \text{SNR}_1^{(1-(1-\beta_2)^+)}, \log \text{SNR}_1^{(1-\beta_1)})$ and for the case of $\beta_2<1$ we need to handle the private part. After some modification, one can see that the same technique yields a common rate of $\phi_2(\log \text{SNR}_1^{(1-(1-\beta_2)^+)}, \log \text{SNR}_1^{(1-\beta_1)})$. 

\end{proof}

\subsection{Converse}

For the converse, we make use of the linear deterministic bound. We start with 
$ nR=I(W;Y_1^n) - I(W;Y_2^n) + n\epsilon$, where we need to bound the two mutual information terms. We can use a result of \cite[Thm.1]{Bresler2008} for the complex Gaussian IC, which shows that the capacity is within 42 bits of the deterministic IC capacity. 
We can then re-use the deterministic techniques, to show the converse, leading to the following theorem.
\begin{theorem}
The secrecy rate $R$ of the Gaussian wiretap channel with one helper and symmetric channel gains at the wiretapper is bounded from above by

\begin{equation*}
R \leq \min\{ r_{Gub1}, r_{Gub2}, r_{Gub3}\}
\end{equation*}
with \begin{IEEEeqnarray*}{rCl}
r_{Gub1}&=& (n_{11}-n_{2})^++\tfrac{1}{2}(\max\{n_{11},n_{21}\}-(n_{11}-n_{2})^+)\\
&&+\:\tfrac{1}{2} (n_{2}-n_{21})^++c\\
r_{Gub2}&=&n_{11}+c\\
r_{Gub3}&=&n_{21}+(n_{11}-n_{21}-n_{2})^+\\
&& +\:[n_2-n_{21}-(n_2-n_{11}+n_{21})^+]^++c
\end{IEEEeqnarray*}
where $c$ is a constant.
\end{theorem}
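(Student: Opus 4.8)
The plan is to mirror the deterministic converse line by line, replacing every discrete entropy $H(\cdot)$ by the corresponding differential entropy $h(\cdot)$ and absorbing the resulting mismatch into a single additive constant via the Gaussian-to-deterministic comparison of \cite[Thm.~1]{Bresler2008}. First I would reproduce the opening chain exactly as in \eqref{startEQN}: starting from $nR = I(W;Y_1^n) - I(W;Y_2^n) + n\epsilon$ (Fano plus the secrecy constraint), and using that $X_1^n$ is a deterministic function of $W$ while $X_2^n$ is independent of $W$, one obtains
\begin{IEEEeqnarray*}{rCl}
nR &\le& h(Y_1^n) - h(Y_2^n) \\
&& +\: h(h_2 X_2^n + Z_2^n) - h(h_{21}X_2^n + Z_1^n) + n\epsilon ,
\end{IEEEeqnarray*}
which is the continuous analogue of the deterministic expression $H(\mathbf{Y}_1^n)-H(\mathbf{Y}_2^n)+H(\bar{\mathbf{X}}_2^n)-H(\mathbf{X}_2^n)$, with the helper seen at Eve and at Bob playing the roles of $\bar{\mathbf{X}}_2$ and $\mathbf{X}_2$.

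Next I would transfer the three bounds. The key enabling step is a comparison lemma: for any admissible input, each output differential entropy is within a per-symbol constant of the entropy of its deterministic counterpart, which is precisely the content of the constant-gap argument in \cite{Bresler2008}. Using this, the positive terms $h(Y_1^n)$ and $h(h_2X_2^n+Z_2^n)$ are upper bounded by $n\max\{n_{11},n_{21}\}+nc'$ and $nn_{2}+nc'$ respectively, while for $r_{Gub2}$ one simply bounds $h(Y_1^n)\le nn_{11}+nc'$ and shows, exactly as in the $\beta_1>2$ case of the deterministic proof, that the helper terms cancel, yielding $r_{Gub2}=n_{11}+c$.

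For $r_{Gub1}$ and $r_{Gub3}$ I would replay the corresponding deterministic chains — the symmetrisation step that doubles $h(Y_1^n)-h(Y_2^n)$ and splits the received signal into common and private parts at the levels $\eta$ and $\eta'$, followed by the drop-one-conditioning inequality $-2h(\cdots)\le -h(\cdots\mid\cdot) - h(\cdots\mid\cdot)$ — now carried out with differential entropies. Each split at an integer bit level and each dropped conditioning incurs only a bounded penalty, and collecting these together with the gap from the comparison lemma reproduces the same three expressions as in the deterministic converse, each shifted by a single constant $c$.

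I expect the main obstacle to be the negative entropy terms $-h(h_{21}X_2^n+Z_1^n)$ and $-h(Y_2^n)$: to preserve the bound we must \emph{lower} bound these within a constant of their deterministic counterparts, and a naive floor such as $h(\cdot)\ge h(Z^n)$ is far too loose. The resolution is that the quantities actually appearing in the chain are \emph{differences} built from the same underlying $X_2^n$ passed through channels differing by a fixed SNR ratio (e.g.\ $h(h_2X_2^n+Z_2^n)-h(h_{21}X_2^n+Z_1^n)$), so the difference is controlled by that ratio — a deterministic level gap — plus a constant. Making this precise for arbitrary, possibly non-Gaussian, helper inputs is where the comparison lemma of \cite{Bresler2008} does the essential work, and it is the step that must be handled most carefully.
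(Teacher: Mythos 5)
Your proposal follows the same overall strategy as the paper's proof --- transfer the linear--deterministic converse to the Gaussian model via the constant-gap comparison with \cite{Bresler2008}, then re-use the deterministic chain --- but it executes the transfer at a genuinely different level. The paper's proof is a one-shot reduction: it replaces the Gaussian difference $I(W;Y_1^n)-I(W;Y_2^n)$ by the deterministic difference $I(W;\mathbf{Y}_1^n)-I(W;\mathbf{Y}_2^n)$ at cost $nc$, and then quotes the deterministic converse as a black box. You instead never leave the Gaussian domain: your opening identity $nR \le h(Y_1^n)-h(Y_2^n)+h(h_2X_2^n+Z_2^n)-h(h_{21}X_2^n+Z_1^n)+n\epsilon$ is exactly the continuous analogue of \eqref{startEQN} (and is correctly derived from Fano, the secrecy constraint, $X_1^n$ being a function of $W$, and $X_2^n\perp W$), and you then replay the deterministic steps with differential entropies, absorbing per-term discrepancies into constants. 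This buys something real: the paper's substitution is not a literal corollary of \cite[Thm.~1]{Bresler2008}, which is a statement about capacity regions of interference channels, not about per-code mutual informations involving a secrecy message $W$ (one would first have to specify how a Gaussian code induces deterministic channel variables). Your term-wise route avoids that ambiguity, and your diagnosis of the key difficulty --- that the negative terms must be handled as \emph{differences} of entropies of the same helper signal seen through two gains, bounded by the gain ratio --- is the right resolution; it can be made rigorous by a scaling/noise-decomposition argument of the form $h(aX^n+Z^n)-h(bX^n+\tilde{Z}^n)\le n\bigl(\log|a/b|^2\bigr)^+ + nc$ for independent unit-variance noises, without needing \cite{Bresler2008} at all for those terms.

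One concrete misstep needs repair: for $r_{Gub2}$ you claim $h(Y_1^n)\le n\,n_{11}+nc'$. This is false precisely in the regime where $r_{Gub2}$ is the active bound ($\beta_1>2$, i.e.\ $n_{21}>n_{11}$): with $X_2^n$ i.i.d.\ Gaussian at full power, $h(Y_1^n)$ grows like $n\,n_{21}$. The deterministic proof does not bound $H(\mathbf{Y}_1^n)$ by $nn_{11}$ either; it uses subadditivity, $H(\mathbf{Y}_1^n)\le H(\mathbf{X}_1^n)+H(\mathbf{X}_2^n)$, and only after the helper terms cancel is the survivor $H(\mathbf{X}_1^n)\le nn_{11}$. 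The Gaussian analogue of that subadditivity requires splitting the noise, $Z_1^n=Z_a^n+Z_b^n$, so that $h(Y_1^n)\le h(h_{11}X_1^n+Z_a^n)+h(h_{21}X_2^n+Z_b^n)$; the second term then cancels against $-h(h_{21}X_2^n+Z_1^n)$ up to a constant, and the first term is at most $n\,n_{11}+nc$. Equivalently, after cancelling the eavesdropper terms via $h(Y_2^n)\ge h(Y_2^n|X_1^n)=h(h_2X_2^n+Z_2^n)$, what remains is $I(X_1^n;Y_1^n)$, and the genie bound $I(X_1^n;Y_1^n)\le I(X_1^n;Y_1^n\mid X_2^n)\le n\,n_{11}+nc$ (valid since $X_1^n\perp X_2^n$) finishes it. With that fix, and with the common/private splits for $r_{Gub1}$ and $r_{Gub3}$ carried out through the same noise-decomposition device, your plan goes through and is, if anything, closer to a fully rigorous argument than the paper's own sketch.
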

\begin{proof}
\begin{IEEEeqnarray*}{rCl}
nR&\leq & I(W;Y_1^n)-I(W;Y_2^n)+n\epsilon\\
&\leq & I(W;\mathbf{Y}_1^n)-I(W;\mathbf{Y}_2^n)+n\epsilon +nc\\
&\leq&  n(n_{11}-n_{2})^++\tfrac{n}{2} (n_{2}-n_{21})^+\\
&&+\:\tfrac{n}{2}(\max\{n_{11},n_{21}\}-(n_{11}-n_{2})^+)+n\epsilon+nc.
\end{IEEEeqnarray*}
Dividing by $n$ and taking $n\rightarrow \infty$ shows the result of the first bound. The  other bounds can be shown similarly.
\end{proof}

\section{Conclusions}
We have shown an achievable scheme and a converse bound for the sum capacity of the Gaussian wiretap channel with a helper which is tight for a certain range of parameters (see Fig. \ref{beta_1}). We used the linear deterministic approximation of the model, to gain insights into the model structure, and transferred those results to the Gaussian model, within a constant gap. These techniques can be summarized as signal-scale alignment methods, where we used jamming alignment at the eavesdropper in the signal-scale, while minimizing the negative effect on the decoding at the legitimate receiver. Looking into figures \ref{beta_2} and \ref{beta_1}, one can see the achievable rate scaled by the single-link channel, with varying parameters $\beta_1,\beta_2$, i.e. channel gain configurations. These figures can be interpreted as a form of {\it{generalized secure degrees of freedom}}, where the minimum over the whole parameter range shows the s.d.o.f of the system model. One can see that the figures show s.d.o.f of $\frac{1}{2}$, which agrees with the results of \cite{XieUlukusWiretap-Helper}. Also note that in Fig.~\ref{beta_1}, in the range of $\tfrac{2}{3}\leq\beta_1<2$, the achievable scheme fluctuates between the upper bound and $\tfrac{1}{2}$. We believe that this is a result of the approximation error, introduced by the deterministic approximation and the bit-level techniques which also gets transferred to the Gaussian model. A more sensitive deterministic model, like the one used in \cite{Niesen-Ali}, could help to completely reach the upper bound. This would give a constant-gap sum-capacity result for the whole range. 

\begin{figure}
\centering
\includegraphics[scale=0.9]{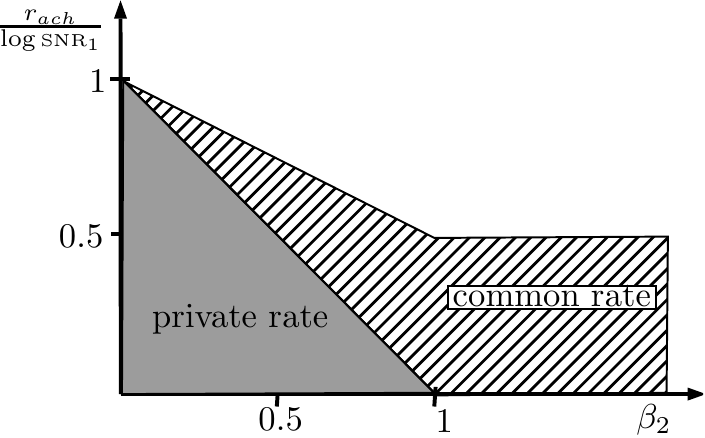}
\caption{Illustration of the achievable secrecy rate in relation to the single-link scenario, and variation in the $\beta_2$ parameter, while $\beta_1$is fixed at 0.75.}
\label{beta_2}
\end{figure}
\begin{figure}
\centering
\includegraphics[scale=0.9]{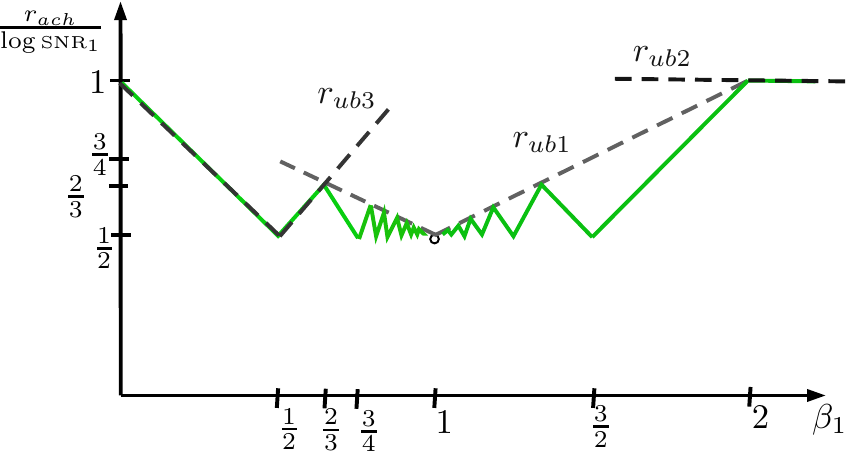}
\caption{Illustration of the achievable secrecy rate in relation to the single-link scenario, and variation in the $\beta_1$ parameter, while $\beta_2$ is fixed at 1.}
\label{beta_1}
\end{figure}

\bibliographystyle{./IEEEtran}
\bibliography{./ref}

\end{document}